\newtheorem{Theorem}{Theorem}[section]
\newtheorem{lem}[Theorem]{Lemma}
\newtheorem{Remark}[Theorem]{Remark}
\newtheorem{Corollary}[Theorem]{Corollary}
\newtheorem{Proposition}[Theorem]{Proposition}
\newtheorem{Example}[Theorem]{Example}
\numberwithin{equation}{section}
\numberwithin{table}{section}
\begin{document}

\title{Constructions of  cyclic constant dimension codes }

\insert\footins{\footnotesize {\it Email addresses}:
{bocong\_chen@yahoo.com (B. Chen)},
hwliu@mail.ccnu.edu.cn (H. Liu). }

\author{Bocong Chen$^{1}$,~~Hongwei Liu$^{2}$}

\date{\small $^{1}$  School of Mathematics, South China University of Technology, Guangzhou,
Guangdong, 510641, China\\
${}^2$School of Mathematics and Statistics,
Central China Normal University,
Wuhan, Hubei, 430079, China}

\maketitle

\begin{abstract}
Subspace codes and particularly constant dimension codes have attracted much attention in recent years
due to their applications in random network coding. As a particular subclass of subspace codes,
cyclic subspace codes have   additional properties
that can be applied efficiently  in  encoding and decoding algorithms.
It is
desirable to find   cyclic constant dimension codes
 such that both  the code sizes and the minimum   distances
are as large as possible. In this paper, we explore the ideas  of constructing cyclic constant dimension codes proposed in
\big(\cite{Ben},  IEEE Trans. Inf. Theory,   2016\big) and  \big(\cite{OO16}, Des. Codes Cryptogr.,
 2016\big) to obtain further results.
Consequently, new code constructions  are provided  and  several  previously known results in \cite{Ben} and \cite{OO16} are extended.

\medskip
\textbf{Keywords:} Cyclic subspace codes, random network coding,
constant dimension codes, linearized polynomials, subspace polynomials.

\medskip
\textbf{2010 Mathematics Subject Classification:}~11T71,  11T06.
\end{abstract}

\section{Introduction}
Let $\mathbb{F}_q$ be the finite field of size $q$   and let $\mathbb{F}_{q^{N}}$ be the field extension of degree $N$ over $\mathbb{F}_{q}$;
$\mathbb{F}_{q^{N}}$ can be viewed as an $N$-dimensional vector space over $\mathbb{F}_q$.
The set of all subspaces of $\mathbb{F}_{q^{N}}$,
denoted  by $\mathcal{P}_q(N)$,  is called the {\em projective space} of order $N$ over $\mathbb{F}_q$ (see \cite{Etzion2}).
For any $U,V\in \mathcal{P}_q(N)$, the {\em subspace distance}  between $U$ and $V$ is defined to be
\begin{equation*}
\begin{split}
d(U,V)=\dim\big(U+V\big)-\dim\big(U\bigcap V\big)
=\dim\big(U\big)+\dim\big(V\big)-2\dim\big(U\bigcap V\big).
\end{split}
\end{equation*}
It turns out that the set $\mathcal{P}_q(N)$ equipped with the subspace distance is indeed a {\em metric space} (see \cite{Koetter}).
A {\em subspace code} $\mathcal{C}$ is simply a nonempty subset  of $\mathcal{P}_q(N)$;
if,  in addition,  all the elements of $\mathcal{C}$ have the same dimension $k$, then
$\mathcal{C}$ is called a  {\em $k$-dimensional subspace code} (or {\em constant dimension code} for short).
The {\em minimum (subspace) distance} of any subspace code $\mathcal{C}$ is defined to be
$$
d(\mathcal{C})=\min\limits_{U\neq V\in \mathcal{C}} d(U,V).
$$

Subspace codes and particularly constant dimension codes have attracted much attention in recent years
due to their applications in random network coding for correction of  errors and erasures \cite{Koetter}.
Subspace codes are  also of interest from mathematical
viewpoints  (see, for example, \cite{Ahl}, \cite{Chi}, \cite{Martin},  \cite{Sch}  and \cite{Xia}).
One of the main research problems on constant dimension codes is to  find systematic methods to produce good $k$-dimensional subspace codes with
a large code size and a large minimum  distance
when $q, N, $ and $k$ are fixed.
The   seminal works \cite{Koetter,Silva08}
presented   novel constructions of large constant dimension codes through
linearized polynomials.

As a particular subclass of subspace codes, cyclic subspace codes have   additional properties
that can be applied efficiently  in  encoding and decoding algorithms (e.g., see \cite{Tra}).
For a given subspace $U\in\mathcal{P}_q(N)$ and  $\alpha\in\mathbb{F}_{q^N}^*$  (where $\mathbb{F}_{q^N}^*=\mathbb{F}_{q^N}\setminus\{0\}$),
the {\em cyclic shift}
of $U$ is defined by $\alpha U=\{\alpha u\,|\,u\in U\}$, where the product $\alpha u$ is taken in $\mathbb{F}_{q^N}$.
It is clear that $\alpha U$ is a vector space over $\mathbb{F}_q$ having the same dimension as $U$.
Two cyclic shifts are called {\em distinct}
if they form two different subspaces.
A subspace code $\mathcal{C}$ is said to be {\em cyclic} if
$\alpha U\in \mathcal{C}$ for any   $\alpha\in\mathbb{F}_{q^N}^*$ and any $U\in \mathcal{C}$.
Several  optimal cyclic subspace
codes with small dimensions were found in \cite{Etzion2} and \cite{Koh}.
A thorough analysis of the algebraic structure of
cyclic subspace codes was given in \cite{Tra}. In \cite{BEOV13}, an optimal code which also forms a $q$-analog of Steiner system was presented.

We  tacitly assume $k>1$, as the case $k=1$ is uninteresting.
The biggest possible value for the minimum   distance of any $k$-dimensional  cyclic subspace code  is $2k$. However,
it is not hard to see that if the size of a $k$-dimensional cyclic  subspace code is greater than or equal to $(q^N-1)/(q-1)$ then its minimum  distance cannot achieve $2k$. In other words,  the best minimum  distance of a $k$-dimensional cyclic
subspace code whose size is greater than or equal to $(q^N-1)/(q-1)$  can attain is thus $2k-2$.
By virtue of this fact and using computer searches,
\cite{Tra} and \cite{Glu} raised the following conjecture: For any positive integers $N$ and $k$ with
$k<N/2$, there exists a $k$-dimensional cyclic subspace code of size $(q^N-1)/(q-1)$ and of minimum distance $2k-2$.
As mentioned at the end of \cite{Glu}, it would also be interesting to  find systematic
methods to produce $k$-dimensional cyclic subspace codes whose sizes exceed   $(q^N-1)/(q-1)$ and whose minimum distances
remain exactly $2k-2$.

Recently, Ben-Sasson {\em et al.} \cite{Ben}
used subspace polynomials to generate $k$-dimensional
cyclic subspace codes with size $(q^N-1)/(q-1)$ and minimum distance $2k-2$:
Let $V$ denote  the  set of roots of the trinomial
$X^{q^k}+X^{q}+X\in \mathbb{F}_{q}[X]$ (suppose $V$ is contained in $\mathbb{F}_{q^N}$). Then
$\{\alpha V\,|\,\alpha\in \mathbb{F}_{q^N}^*\}$
is a cyclic subspace code of which the size is $(q^N-1)/(q-1)$ and the minimum distance is $2k-2$. The conclusion of this result reveals
that
the aforementioned conjecture holds true for any  given $k$
and infinitely many values of $N$.
Furthermore, in the same paper the authors provided a construction of  $k$-dimensional cyclic subspace  codes of  size $r\frac{q^N-1}{q-1}$
and  minimum distance $2k-2$, which is the first systematic construction of cyclic constant dimension codes of size greater than
$(q^N-1)/(q-1)$.
Otal and \"{O}zbudak \cite{OO16} generalized and improved the construction in
\cite{Ben} by studying the roots of the trinomials
$X^{q^k}+\theta_i X^q+\gamma_i X\in \mathbb{F}_{q^n}[X]$,
where $\theta_i$ and $\gamma_i$ are
nonzero elements of  $\mathbb{F}_{q^n}$  for $1\leq i\leq r$.
As a consequence, some constraint conditions in \cite{Ben} are relaxed, the density of the length parameter $N$ is increased,
and the size of
$k$-dimensional cyclic subspace codes can be increased up to $(q^n-1)\frac{q^N-1}{q-1}$ without  decreasing the
minimum  distance $2k-2$.

The present paper is to extend the previous works  \cite{Ben} and \cite{OO16} by proposing a different approach.
We   explore the ideas  of constructing cyclic constant dimension codes proposed in
\cite{Ben} and \cite{OO16} to obtain further results;
consequently,  new code constructions  are provided and
several  previously known results in \cite{Ben} and \cite{OO16} are extended.
More explicitly, we show that if
the  set of roots of the trinomial
$X^{q^k}+a_\ell X^{q^{\ell}}+a_0 X\in \mathbb{F}_{q^n}[X]$
is denoted by $V$ (suppose $V$ is contained in $\mathbb{F}_{q^N}$),
where $1\leq \ell<k$ is a positive integer relatively prime to $k$, $n$ is an arbitrary positive integer and $a_0,a_\ell$ are nonzero elements
of $\mathbb{F}_{q^n}$,  then
$\{\alpha V\,|\,\alpha\in \mathbb{F}_{q^N}^*\}$
is a cyclic subspace code of which the size is $(q^N-1)/(q-1)$ and the minimum distance is $2k-2$ (see Lemma \ref{generalized1}).
Moreover, unions of such cyclic constant dimension codes from the roots of trinomials and binomials are also discussed
(see  Theorem \ref{theoremend} and its corollaries). Several examples are   provided to illustrate our results.
We mention that we can produce an infinite family of  $k$-dimensional cyclic subspace codes with size
$(q^n-1)\frac{(q^N-1)}{q-1}+\frac{q^N-1}{q^k-1}$ and minimum distance $2k-2$ (see Corollary \ref{qn}).

The remainder of this paper is organized as follows. Section \ref{sec-pr}
establishes some  notations that will be used throughout,  and reviews some basic  results that will be needed in subsequent sections,
including the notions of linearized polynomials and subspace polynomials.
Section \ref{sec-main} contains our main results. Section \ref{sec-concluding} summarizes this paper.

\section{Preliminaries}\label{sec-pr}
A class of polynomials that plays
an important role in the study of subspace  codes is the so-called  linearized
polynomials (e.g., see \cite[P. 107]{Lidl}). In this section we will briefly review
the definitions and some basic properties about linearized polynomials.

Throughout this paper,  $\mathbb{F}_q$ denotes the finite field of size $q$.
Let $n\geq1$ be a positive integer and let $\mathbb{F}_{q^n}$ be the field extension of degree $n$ over $\mathbb{F}_q$.
Recall that $\mathcal{P}_q(N)$ denotes the set of all subspaces of $\mathbb{F}_{q^N}$, where $N\geq1$ is an integer.
A {\em linearized polynomial} over $\mathbb{F}_{q^n}$ is a polynomial of the form
\begin{eqnarray*}
f(X)=\alpha_k X^{q^k}+\alpha_{k-1} X^{q^{k-1}}+\cdots + \alpha_{1} X^{q}+ \alpha_{0} X\in \mathbb{F}_{q^n}[X],
\end{eqnarray*}
where $\alpha_i$ are elements of $\mathbb{F}_{q^n}$ for $0\leq i\leq k$.
If $\alpha_k\neq0$ then  $k$ is called the {\it $q$-degree} of $f$.
Linearized polynomials have   the following
properties (see \cite{Lidl}):
\begin{Proposition}
The roots of any linearized polynomial form a subspace in some
extension field over $\mathbb{F}_{q^n}$.
Conversely, for any  subspace $V\in \mathcal{P}_q(N)$, the polynomial
$$
\prod_{v\in V}(X-v)
$$
is a linearized polynomial.
\end{Proposition}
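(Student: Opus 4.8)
The plan is to prove the two assertions separately, treating the first as essentially immediate and the second by induction on the dimension. For the first assertion, I would begin by observing that a linearized polynomial $f(X)=\sum_{i=0}^{k}\alpha_i X^{q^i}$ acts as an $\mathbb{F}_q$-linear endomorphism of any extension field $E$ of $\mathbb{F}_{q^n}$ containing all of its roots. Indeed, since $q$ is a power of the characteristic, the Frobenius map $x\mapsto x^{q^i}$ is additive, so $f(u+v)=f(u)+f(v)$; and since every $c\in\mathbb{F}_q$ satisfies $c^{q}=c$, one has $f(cu)=cf(u)$. The set of roots of $f$ in $E$ is then exactly the kernel of this $\mathbb{F}_q$-linear map, and hence an $\mathbb{F}_q$-subspace, which settles the first claim.

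For the converse I would set $f_V(X)=\prod_{v\in V}(X-v)$ and induct on $d=\dim_{\mathbb{F}_q}V$. When $d=0$ we have $V=\{0\}$ and $f_V(X)=X$, which is linearized. For the inductive step I would choose a basis $v_1,\dots,v_d$ of $V$, put $W=\langle v_1,\dots,v_{d-1}\rangle$ so that $\dim W=d-1$ and, by the induction hypothesis, $f_W$ is linearized. Writing $V$ as the disjoint union of the $q$ cosets $W+cv_d$ with $c\in\mathbb{F}_q$ yields the factorization
$$
f_V(X)=\prod_{c\in\mathbb{F}_q}\ \prod_{w\in W}\bigl(X-w-cv_d\bigr)=\prod_{c\in\mathbb{F}_q}f_W\!\left(X-cv_d\right).
$$
Because $f_W$ is linearized, and hence additive by the first part, I can replace $f_W(X-cv_d)$ by $f_W(X)-c\,\beta$, where $\beta=f_W(v_d)$; moreover $\beta\neq0$, since the roots of $f_W$ are precisely the elements of $W$ while $v_d\notin W$.

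It then remains to simplify $\prod_{c\in\mathbb{F}_q}\bigl(f_W(X)-c\beta\bigr)$. Using the standard identity $\prod_{c\in\mathbb{F}_q}(Z-c)=Z^{q}-Z$ with $Z=f_W(X)/\beta$, a short computation gives $f_V(X)=f_W(X)^{q}-\beta^{q-1}f_W(X)$. Since $f_W$ is linearized, so is $f_W(X)^{q}$ (raising to the $q$-th power merely shifts each exponent $q^i$ to $q^{i+1}$), while $\beta^{q-1}f_W(X)$ is plainly linearized; as the sum of two linearized polynomials is again linearized, $f_V$ is linearized, completing the induction.

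The routine parts are the Frobenius computations and the identity $\prod_{c}(Z-c)=Z^{q}-Z$. The step needing the most care, and the real content of the converse, is the coset factorization together with the substitution $f_W(X-cv_d)=f_W(X)-c\beta$: this is exactly where the additivity furnished by the first part is invoked, and one must verify that $\beta=f_W(v_d)$ is nonzero (equivalently, that $W$ is the \emph{full} root set of $f_W$ and $v_d\notin W$), so that the $q$ factors $f_W(X)-c\beta$ are genuinely distinct and the degree count $\deg f_V=q^{d}$ is respected.
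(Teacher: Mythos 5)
Your proof is correct. Note, however, that the paper itself does not prove this Proposition at all: it is stated as a known fact with a citation to Lidl and Niederreiter's \emph{Finite Fields}, so there is no internal argument to compare against. What you have written is essentially the standard proof found in that reference: the forward direction via the observation that a linearized polynomial induces an $\mathbb{F}_q$-linear endomorphism of its splitting field (additivity from the Frobenius, homogeneity from $c^q=c$ for $c\in\mathbb{F}_q$), so its root set is a kernel; and the converse by induction on $\dim V$, factoring $f_V(X)=\prod_{c\in\mathbb{F}_q}f_W(X-cv_d)$ over the cosets of a hyperplane $W$ and collapsing the product via $\prod_{c\in\mathbb{F}_q}(Z-c)=Z^q-Z$ to get $f_V(X)=f_W(X)^q-\beta^{q-1}f_W(X)$ with $\beta=f_W(v_d)\neq0$. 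Two small points worth making explicit if this were to be inserted as a proof: first, the substitution $f_W(X-cv_d)=f_W(X)-c\beta$ is a \emph{polynomial} identity, valid because $(X-cv_d)^{q^i}=X^{q^i}-cv_d^{q^i}$ holds in $\mathbb{F}_{q^N}[X]$ in characteristic $p$, not merely an identity of evaluations, which is what the first part literally provides; second, the coefficients of the resulting linearized polynomial lie in $\mathbb{F}_{q^N}$ (the field containing $V$), which is consistent with the Proposition's phrasing but worth noting since the paper's definition of a linearized polynomial fixes a coefficient field $\mathbb{F}_{q^n}$. Neither point is a gap; your induction, the nonvanishing of $\beta$, and the degree count $\deg f_V=q^d$ are all handled correctly.
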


It is well known that a linearized polynomial has no multiple roots if and only if the coefficient of $X$
is nonzero (see \cite[Theorem 3.50]{Lidl}).
We will be particulary interested in such
linearized polynomials, which merit a special name:
A monic linearized polynomial is called a {\em subspace polynomial} if it has no multiple roots
(see \cite{Ben12}, \cite{Ben10}, \cite{Cheng} or \cite{Wa}).
We remark that
a subspace polynomial
with respect to $\mathbb{F}_{q^n}$ can be defined alternatively as the annihilator polynomial of a subspace of $\mathbb{F}_{q^n}$,
in order to make sense into using the term ``subspace".
There is  an obvious one-to-one correspondence between the $k$-dimensional subspaces of $\mathcal{P}_q(N)$
and the subspace polynomials with $q$-degree $k$ whose splitting fields are $\mathbb{F}_{q^N}$.
In particular, two subspaces are identical  if and only if their
corresponding subspace polynomials are identical.
This suggests that the resolution of vector space problems can be
converted into the resolution of polynomial problems.

Given a $k$-dimensional subspace $V\in \mathcal{P}_q(N)$ and a nonzero element $\alpha\in \mathbb{F}_{q^N}$,
the subspace polynomial corresponding to  the subspace $\alpha V=\{\alpha v\,|\,v\in V\}$
has been characterized in \cite[Lemma 5]{Ben}:
\begin{lem}\label{lem1}
Let $V$ be a $k$-dimensional subspace of $\mathbb{F}_{q^N}$  and let $\alpha$ be a nonzero element of
$\mathbb{F}_{q^N}$. If
$$T(X)=\prod_{v\in V}(X-v)=X^{q^k}+\sum_{i=0}^{k-1}a_{i}X^{q^i}$$
is the subspace polynomial corresponding to $V$,
then the subspace polynomial corresponding  to $\alpha V$ is given by
$$T_\alpha(X)=\prod_{v\in V}(X-\alpha v)=X^{q^k}+\sum_{i=0}^{k-1}\alpha^{q^k-q^i}a_{i}X^{q^i}.$$
\end{lem}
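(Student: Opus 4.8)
The plan is to exploit the fact that scaling by $\alpha$ amounts, at the level of polynomials, to the substitution $X\mapsto X/\alpha$, which turns the product taken over $\alpha V$ back into a product taken over $V$. First I would observe that $T_\alpha(X)=\prod_{v\in V}(X-\alpha v)$ is, by construction, the monic polynomial whose root set is exactly $\alpha V=\{\alpha v\,|\,v\in V\}$. Since multiplication by the nonzero element $\alpha$ is an $\mathbb{F}_q$-linear bijection of $\mathbb{F}_{q^N}$, the image $\alpha V$ is again a $k$-dimensional $\mathbb{F}_q$-subspace; hence $\alpha V$ has exactly $q^k$ elements, the $\alpha v$ are pairwise distinct, and $T_\alpha(X)$ is a monic polynomial of degree $q^k$ with no repeated roots. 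By the one-to-one correspondence between $k$-dimensional subspaces and their subspace polynomials recalled above, $T_\alpha(X)$ is therefore precisely the subspace polynomial corresponding to $\alpha V$, and it remains only to compute its coefficients.

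The key step is the scaling identity $T_\alpha(X)=\alpha^{q^k}\,T(X/\alpha)$. To establish it I would pull the common factor $\alpha$ out of each of the $q^k$ linear factors:
$$
\prod_{v\in V}(X-\alpha v)=\prod_{v\in V}\alpha\Bigl(\tfrac{X}{\alpha}-v\Bigr)=\alpha^{q^k}\prod_{v\in V}\Bigl(\tfrac{X}{\alpha}-v\Bigr)=\alpha^{q^k}\,T\!\Bigl(\tfrac{X}{\alpha}\Bigr),
$$
where the exponent $q^k$ on $\alpha$ simply records the cardinality $|V|=q^k$.

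Finally I would substitute the explicit expansion of $T$ and simplify. Writing $T(X)=X^{q^k}+\sum_{i=0}^{k-1}a_iX^{q^i}$ gives
$$
T\!\Bigl(\tfrac{X}{\alpha}\Bigr)=\frac{X^{q^k}}{\alpha^{q^k}}+\sum_{i=0}^{k-1}a_i\,\frac{X^{q^i}}{\alpha^{q^i}},
$$
so that multiplying through by $\alpha^{q^k}$ yields $T_\alpha(X)=X^{q^k}+\sum_{i=0}^{k-1}\alpha^{q^k-q^i}a_iX^{q^i}$, exactly as claimed. I do not anticipate a genuine obstacle here; the only points requiring a moment's care are confirming that $\alpha V$ remains $k$-dimensional, so that $q^k=|V|$ is the correct power of $\alpha$ to extract, and that the scaling preserves distinctness of the roots, both of which follow at once from the injectivity of multiplication by the nonzero element $\alpha$.
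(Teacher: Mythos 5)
Your proposal is correct and complete. Note that the paper does not actually prove this lemma; it quotes it from \cite[Lemma 5]{Ben}, so there is no internal proof to compare against, and your argument is precisely the standard one: the scaling identity $T_\alpha(X)=\alpha^{q^k}\,T(X/\alpha)$, obtained by pulling the factor $\alpha$ out of each of the $q^k$ linear factors, together with the observation that multiplication by the nonzero $\alpha$ is an $\mathbb{F}_q$-linear bijection (so $\alpha V$ is again a $k$-dimensional subspace and the roots $\alpha v$ are distinct), identifies $T_\alpha$ as the subspace polynomial of $\alpha V$ and yields the coefficient formula. A variant worth knowing, which avoids rational functions entirely, is to evaluate the claimed polynomial at an arbitrary $\alpha v$ with $v\in V$, using $(\alpha v)^{q^i}=\alpha^{q^i}v^{q^i}$:
$$
(\alpha v)^{q^k}+\sum_{i=0}^{k-1}\alpha^{q^k-q^i}a_i(\alpha v)^{q^i}
=\alpha^{q^k}\Bigl(v^{q^k}+\sum_{i=0}^{k-1}a_iv^{q^i}\Bigr)=\alpha^{q^k}T(v)=0,
$$
so the monic polynomial $X^{q^k}+\sum_{i=0}^{k-1}\alpha^{q^k-q^i}a_iX^{q^i}$ of degree $q^k$ vanishes on all $q^k$ elements of $\alpha V$ and must therefore equal $\prod_{v\in V}(X-\alpha v)$. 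Both routes are essentially the same computation read in opposite directions, and either would serve as a complete proof here.
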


\section{Constructions of  cyclic constant dimension codes}\label{sec-main}
We first propose a new approach to generalize \cite[Theorem 3]{Ben}, which can be seen as the $\ell=1$
case of the following lemma.

\begin{lem}\label{generalized1}
Let $k$ and $\ell$ be positive integers with $1\leq\ell<k$ and $\gcd(\ell,k)=1$.  Let $a_0$ and $a_\ell$ be nonzero elements of $\mathbb{F}_{q^n}$,
where $n$ is a positive integer.
Suppose that the set $V$ of roots of the subspace polynomial
$$T(X)=X^{q^k}+a_\ell X^{q^\ell}+a_0X\in \mathbb{F}_{q^n}[X]$$
is contained in $\mathbb{F}_{q^N}$.
Then
$$
\mathcal{C}=\Big\{\alpha V\,\Big{|}\, \alpha\in \mathbb{F}_{q^N}^*\Big\}
$$
is a $k$-dimensional cyclic subspace code with size $\frac{q^N-1}{q-1}$ and  minimum distance $2k-2$.
\end{lem}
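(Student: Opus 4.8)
The plan is to reduce everything to a single key estimate: for every $\gamma\in\mathbb{F}_{q^N}^*\setminus\mathbb{F}_q^*$ one has $\dim_{\mathbb{F}_q}(V\cap\gamma V)\le 1$. Granting this, the three remaining assertions follow quickly. First, $T(X)$ is a monic $q$-linearized polynomial of $q$-degree $k$ whose coefficient of $X$ is $a_0\ne0$, so it is a subspace polynomial and $V$ is a genuine $k$-dimensional $\mathbb{F}_q$-subspace; multiplication by any $\alpha\in\mathbb{F}_{q^N}^*$ is an $\mathbb{F}_q$-linear bijection, so each $\alpha V$ is again $k$-dimensional and $\mathcal{C}$ is cyclic by construction. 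Second, since $\alpha V=\beta V$ is equivalent to $(\beta\alpha^{-1})V=V$, i.e.\ to $\dim(V\cap(\beta\alpha^{-1})V)=k$, the key estimate together with $k>1$ forces $\beta\alpha^{-1}\in\mathbb{F}_q^*$; hence the distinct cyclic shifts are indexed by $\mathbb{F}_{q^N}^*/\mathbb{F}_q^*$ and $|\mathcal{C}|=(q^N-1)/(q-1)$. Third, for distinct codewords $\alpha V\ne\beta V$ we have $d(\alpha V,\beta V)=2k-2\dim(\alpha V\cap\beta V)=2k-2\dim(V\cap(\beta\alpha^{-1})V)\ge 2k-2$, so $d(\mathcal{C})\ge 2k-2$; and since $|\mathcal{C}|=(q^N-1)/(q-1)>(q^N-1)/(q^k-1)$, the standard packing count mentioned in the introduction rules out the distance $2k$, giving $d(\mathcal{C})=2k-2$.

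Everything therefore rests on proving the key estimate, and this is where the hypothesis $\gcd(\ell,k)=1$ does the real work. I would start from Lemma~\ref{lem1}, which gives
$$T_\gamma(X)=X^{q^k}+\gamma^{q^k-q^\ell}a_\ell X^{q^\ell}+\gamma^{q^k-1}a_0X,$$
the subspace polynomial of $\gamma V$. Because $W:=V\cap\gamma V$ is exactly the set of common roots of $T$ and $T_\gamma$, each $x\in W$ annihilates the difference
$$T(X)-T_\gamma(X)=A\,X^{q^\ell}+B\,X,\qquad A=a_\ell\bigl(1-\gamma^{q^k-q^\ell}\bigr),\ \ B=a_0\bigl(1-\gamma^{q^k-1}\bigr).$$
A short computation shows $A=0\iff\gamma^{q^{k-\ell}}=\gamma$ and $B=0\iff\gamma^{q^k}=\gamma$; since $\gcd(k-\ell,k)=\gcd(\ell,k)=1$, the simultaneous vanishing $A=B=0$ would place $\gamma$ in $\mathbb{F}_{q^{k-\ell}}\cap\mathbb{F}_{q^k}=\mathbb{F}_q$, contrary to $\gamma\notin\mathbb{F}_q^*$. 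If exactly one of $A,B$ vanishes the relation $Ax^{q^\ell}+Bx=0$ forces $x=0$ and $W=0$, so the only substantive case is $A,B\ne0$.

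In that case every nonzero $x\in W$ satisfies $x^{q^\ell-1}=-B/A=:\lambda$, a constant independent of $x$, i.e.\ $x^{q^\ell}=\lambda x$; and, since $x\in V$ is a root of $T$, substituting gives $x^{q^k}=-(a_\ell\lambda+a_0)x=:\mu x$, again with $\mu$ constant. The main obstacle — and the whole point of $\gcd(\ell,k)=1$ — is to upgrade these two ``eigen-relations'' for the Frobenius $\sigma(z)=z^q$ into the single relation $\sigma(x)=cx$. Writing $\sigma^{t\ell}(x)=\Lambda_t\,x$ and $\sigma^{sk}(x)=M_s\,x$ with the explicit constants $\Lambda_t=\lambda^{(q^{t\ell}-1)/(q^\ell-1)}$ and $M_s=\mu^{(q^{sk}-1)/(q^k-1)}$, I would choose positive integers $t,s$ with $t\ell=sk+1$ (available because $\gcd(\ell,k)=1$) and compute $\sigma^{t\ell}(x)=\sigma\bigl(\sigma^{sk}(x)\bigr)=M_s^{q}\,\sigma(x)$ in two ways, obtaining $x^q=(\Lambda_t M_s^{-q})\,x=cx$ with $c$ independent of $x$. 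Consequently every nonzero $x\in W$ solves $x^{q-1}=c$, an equation with at most $q-1$ roots, whence $|W|\le q$ and $\dim_{\mathbb{F}_q}W\le1$. The delicate points to get right are that $\lambda,\mu$ (and hence $c$) really are the same constant for all $x\in W$, that $M_s\ne0$ so that the division is legitimate, and the existence of a positive Bézout solution $t\ell=sk+1$; these are precisely the places where $\gcd(\ell,k)=1$ and the trinomial shape of $T$ are indispensable.
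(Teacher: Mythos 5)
Your proof is correct, and its skeleton coincides with the paper's: the identical reduction of all claims to the single estimate $\dim(V\cap\gamma V)\le 1$ for $\gamma\in\mathbb{F}_{q^N}^*\setminus\mathbb{F}_q^*$ (including the packing count that rules out distance $2k$), the same appeal to Lemma~\ref{lem1}, the same difference polynomial $AX^{q^\ell}+BX$, and the same case analysis in which $\gcd(\ell,k)=1$ forbids $A=B=0$. Where you genuinely diverge is the endgame. The paper takes \emph{two} nonzero elements $a,b\in V\cap\gamma V$, deduces $a/b=\lambda\in\mathbb{F}_{q^\ell}$ from $a^{q^\ell-1}=b^{q^\ell-1}$, substitutes $a=\lambda b$ into $T(a)=0$ and subtracts $\lambda T(b)=0$ to get $\lambda^{q^k}=\lambda$, concluding $\lambda\in\mathbb{F}_{q^\ell}\cap\mathbb{F}_{q^k}=\mathbb{F}_q$. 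You instead work with one root at a time: the two Frobenius eigen-relations $x^{q^\ell}=\lambda x$ and $x^{q^k}=\mu x$ (with $\lambda,\mu$ independent of $x$, which the trinomial shape indeed guarantees), composed along a B\'ezout relation $t\ell=sk+1$, yield $x^{q-1}=c$ for a single constant $c$, so $V\cap\gamma V$ has at most $q-1$ nonzero elements. Both mechanisms are valid and are really two faces of the same fact (the field-intersection identity $\mathbb{F}_{q^\ell}\cap\mathbb{F}_{q^k}=\mathbb{F}_{q^{\gcd(\ell,k)}}$ versus B\'ezout); yours makes the arithmetic role of the gcd hypothesis explicit and bounds the intersection by counting roots of $X^{q-1}-c$, but it costs extra bookkeeping: the existence of a \emph{positive} solution of $t\ell=sk+1$ (take $t_0$ with $t_0\ell\equiv 1\pmod{k}$ and replace $(t,s)$ by $(t+k,s+\ell)$ if necessary) and the nonvanishing of $M_s$, i.e.\ of $\mu$. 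You flag this last point as delicate but do not settle it; it is harmless --- if $V\cap\gamma V$ contains a nonzero $x$ then $\mu x=x^{q^k}\ne 0$ forces $\mu\ne 0$, and if $V\cap\gamma V=\{0\}$ there is nothing to prove --- but that one line belongs in the proof rather than being left as a caveat. The paper's ratio argument avoids these verifications entirely and is correspondingly shorter.
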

\begin{proof}
It is a known fact that if  $\mathcal{C}$ has size $(q^N-1)/(q-1)$ then $\mathcal{C}$ cannot have minimum distance $2k$;
this is simply because if it were $2k$ then $\mathbb{F}_{q^N}^*=\mathbb{F}_{q^N}\setminus\{0\}$ would contain
$\frac{(q^N-1)(q^k-1)}{(q-1)}$ elements, which is impossible.
Therefore,
to obtain the desired result, it suffices to prove that
\begin{equation}\label{goal}
\dim\big(V\bigcap\alpha V\big)\leq1~~\hbox{for any $\alpha\in \mathbb{F}_{q^N}^*\setminus\mathbb{F}_{q}^*$}.
\end{equation}
Fix an element  $\alpha\in \mathbb{F}_{q^N}^*\setminus\mathbb{F}_{q}^*$.
By Lemma \ref{lem1}, the subspace polynomial corresponding to $\alpha V$
is
$$
T_\alpha(X)=X^{q^k}+a_\ell\alpha^{q^k-q^\ell} X^{q^\ell}+a_0\alpha^{q^k-1}X.
$$
Suppose now  $a$ and $b$ are any nonzero elements of $V\bigcap\alpha V$.
We aim  to show that an element  $\lambda\in \mathbb{F}_{q}^*$ can be found such that $a=\lambda b$;
we then conclude that (\ref{goal}) is achieved,  and thus the proof is complete.
To this end, we first claim that if exactly one of  $\alpha^{q^k-q^\ell}-1$ and $\alpha^{q^k-1}-1$ is equal to $0$, then
we arrive at (\ref{goal}) at once.
Indeed, observe that
$$
T_\alpha(X)-T(X)=a_\ell\Big(\alpha^{q^k-q^\ell}-1\Big) X^{q^\ell}+a_0\Big(\alpha^{q^{k}-1}-1\Big)X.
$$
If $\alpha^{q^k-q^\ell}-1\neq0$ and $\alpha^{q^{k}-1}-1=0$ (or, $\alpha^{q^k-q^\ell}-1=0$ and $\alpha^{q^{k}-1}-1\neq0$),
then the subspace polynomials $T_\alpha(X)$ and $T(X)$ have a unique common root  $0$, proving the claim.
If both
$$\alpha^{q^k-q^\ell}-1=0 ~~\hbox{and}~~ \alpha^{q^k-1}-1=0,$$
then $\alpha=\alpha^{q^k}=\alpha^{q^\ell}$, which implies that
$\alpha\in \mathbb{F}_{q^k}$ and $\alpha\in \mathbb{F}_{q^\ell}$. However, our assumption   $\gcd(\ell,k)=1$ forces
$\alpha\in \mathbb{F}_q^*$. This is a contradiction.
Thus, it cannot  occur simultaneously that $\alpha^{q^k-q^\ell}-1=0$ and $\alpha^{q^k-1}-1=0$.
We can assume, therefore, that
$\alpha^{q^k-q^\ell}-1\neq0$ and  $\alpha^{q^k-1}-1\neq0$.
Since $a$ and $b$ are contained in $V\bigcap\alpha V$, we have
$$
T(a)=T(b)=T_\alpha(a)=T_\alpha(b)=0.
$$
This leads to
$$
T_\alpha(a)-T(a)=a_\ell\Big(\alpha^{q^k-q^\ell}-1\Big) a^{q^\ell}+a_0\Big(\alpha^{q^k-1}-1\Big)a=0
$$
and
$$
T_\alpha(b)-T(b)=a_\ell\Big(\alpha^{q^k-q^\ell}-1\Big) b^{q^\ell}+a_0\Big(\alpha^{q^k-1}-1\Big)b=0.
$$
It follows that
$$
a^{q^\ell-1}=\frac{-a_0(\alpha^{q^k-1}-1)}{a_\ell(\alpha^{q^k-q^\ell}-1)}=b^{q^\ell-1},
$$
or, equivalently,
$$
\frac{a}{b}=\Big(\frac{a}{b}\Big)^{q^\ell}
$$
which gives $a/b\in \mathbb{F}_{q^\ell}$.  Let $a/b=\lambda\in \mathbb{F}_{q^\ell}$, namely $a=b\lambda$.
By $T(a)=0$ and $a=\lambda b$, we have
$$
0=T(a)=a^{q^k}+a_\ell a^{q^\ell}+a_0a=\lambda^{q^k}b^{q^k}+a_\ell\lambda^{q^\ell}b^{q^\ell}+a_0b\lambda
=\lambda^{q^k}b^{q^k}+a_\ell\lambda b^{q^\ell}+a_0b\lambda,
$$
where the last equality holds because $\lambda$ is an element of $\mathbb{F}_{q^\ell}$.
Combining with
$$\lambda T(b)=\lambda b^{q^k}+\lambda a_\ell b^{q^\ell}+\lambda a_0b=0,
$$
we have
$\lambda=\lambda^{q^k}$.  By $\gcd(\ell,k)=1$ again, we finally conclude that  $\lambda\in \mathbb{F}_q$, as wanted.
The proof is complete.
\end{proof}

As pointed out by \cite[Section 2.2]{OO16}, the  value  of  $N$ in Lemma \ref{generalized1} cannot be chosen freely; it is depending
on the values of $k,\ell,n$ and the nonzero elements $a_\ell,a_0$.
We include the following example to illustrate Lemma \ref{generalized1}.
\begin{Example}\rm
We adopt the notation in Lemma \ref{generalized1}.
Take $q=3$, $n=1$,   and  $k=5$. Consider the degree  $N'$ of the splitting field   of the polynomial
$X^{3^5}+a_{\ell}X^{3^\ell}+a_0X\in \mathbb{F}_3[X]$, where $1\leq \ell\leq 4$ and $a_{\ell},a_0\in\{1,-1\}$; the values of $N'$
can be determined easily by using the computer algebra system GAP \cite{GAP}, as exhibited in Table $3.1$.
Lemma \ref{goal} ensures that
there exists a $5$-dimensional cyclic subspace code of size $\frac{3^N-1}{2}$ and  minimum distance $8$ in $\mathbb{F}_{3^N}$
when  $N$ is a multiple of $N'$. For instance,  the first row of Table $3.1$ implies that the set of roots of the subspace polynomial
$X^{3^5}+X^{3}+X$ forms a $5$-dimensional cyclic subspace code of size $\frac{3^N-1}{2}$ and  minimum distance $8$  in $\mathbb{F}_{3^N}$
when  $N$ is a multiple of $78$. We then compare among the last column of Table $3.1$ to pick out the minimal elements with respect to
the partially ordered by divisibility (for positive integers $a,b$, $a\leq b$ precisely when $a$ divides $b$).
After a bit of simple calculations, the minimal elements are
$78,121,$ $80$ and $104$.
From the first four rows of Table $3.1$, one sees that \cite[Theorem 3]{Ben} and \cite[Theorem 3]{OO16} only produce
the first two values $78$ and $121$.
This example suggests that Lemma \ref{goal} indeed could provide subspace codes for more various values of $N$.
\begin{table}[!h]\begin{center}\label{table1232}
\caption{The degrees of the  splitting fields of the polynomials $X^{3^5}+a_{\ell}X^{3^\ell}+a_0X$}
\begin{tabular}{ c|c|c|c  }\hline
Values of $\ell$ & $(a_\ell,a_0)$ & Polynomials     &  The degrees $N'$ of the splitting fields over $\mathbb{F}_3$     \\\hline
$1$& $(1,1)$&$X^{3^5}+X^{3}+X$   &  $78$   \\
$1$& $(1,-1)$&$X^{3^5}+X^{3}-X$    &  $78$ \\
$1$& $(-1,1)$&$X^{3^5}-X^{3}+X$    &  $242$ \\
$1$& $(-1,-1)$& $X^{3^5}-X^{3}-X$    &  $121$ \\
$2$& $(1,1)$&$X^{3^5}+X^{3^2}+X$   &  $80$ \\
$2$& $(1,-1)$&$X^{3^5}+X^{3^2}-X$    &  $104$ \\
$2$& $(-1,1)$&$X^{3^5}-X^{3^2}+X$    &  $312$ \\
$2$& $(-1,-1)$& $X^{3^5}-X^{3^2}-X$    &  $80$ \\
$3$& $(1,1)$&$X^{3^5}+X^{3^3}+X$   &  $80$ \\
$3$& $(1,-1)$&$X^{3^5}+X^{3^3}-X$    &  $80$ \\
$3$& $(-1,1)$&$X^{3^5}-X^{3^3}+X$    &  $312$ \\
$3$& $(-1,-1)$& $X^{3^5}-X^{3^3}-X$    &  $104$ \\
$4$& $(1,1)$&$X^{3^5}+X^{3^4}+X$   &  $78$ \\
$4$& $(1,-1)$&$X^{3^5}+X^{3^4}-X$    &  $121$ \\
$4$& $(-1,1)$&$X^{3^5}-X^{3^4}+X$    &  $242$ \\
$4$& $(-1,-1)$& $X^{3^5}-X^{3^4}-X$    &  $78$ \\
\hline
\end{tabular}
\end{center}\end{table}
\end{Example}

Some  cyclic constant dimension codes produced by Lemma \ref{generalized1} can be put together to form a larger code,
but without reducing the minimum distance. The following lemma is a generalization of \cite[Theorem 3]{OO16}, which considers the case $\ell=1$.
\begin{lem}\label{theorem2}
Let $k$ and $\ell$ be positive integers with $1\leq\ell<k$ and $\gcd(\ell,k)=1$, and let
$$
T^{(i)}(X)=X^{q^k}+\theta_i X^{q^\ell}+\gamma_iX\in \mathbb{F}_{q^n}[X]
$$
be $r$ subspace polynomials over $\mathbb{F}_{q^n}$ with $\theta_i$ and $\gamma_i$ being
nonzero elements of $\mathbb{F}_{q^n}$ for $1\leq i\leq r$.
Suppose that $V_i$ is the set of roots of the subspace polynomial $T^{(i)}(X)$ and that
$V_i$ {\rm(}$1\leq i\leq r${\rm)} are contained in $\mathbb{F}_{q^N}$.
If
\begin{equation}\label{assumption}
\Big(\frac{\gamma_i}{\gamma_j}\Big)^{\frac{q^{\ell}-1}{q-1}}\neq\Big(\frac{\gamma_i}{\gamma_j}\big(\frac{\theta_i}{\theta_j}\big)^{-1}\Big)^{\frac{q^{k}-1}{q-1}}
~~\hbox{for any $1\leq i\neq j\leq r$},
\end{equation}
then
$$
\mathcal{C}=\bigcup_{i=1}^r\Big\{\alpha V_i\,\Big{|}\, \alpha\in \mathbb{F}_{q^N}^*\Big\}
$$
is a $k$-dimensional cyclic subspace code of size $r\frac{q^N-1}{q-1}$ and   minimum distance $2k-2$.
\end{lem}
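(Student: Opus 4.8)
The plan is to build on Lemma \ref{generalized1}. For each $i$, the trinomial $T^{(i)}(X)$ is exactly of the form treated there, so the single orbit $\mathcal{C}_i=\{\alpha V_i\mid \alpha\in\mathbb{F}_{q^N}^*\}$ is already a $k$-dimensional cyclic code of size $(q^N-1)/(q-1)$ and minimum distance $2k-2$. Since a union of cyclic codes is cyclic and each $V_i$ is $k$-dimensional, the only two things left to establish are that the $r$ orbits are pairwise disjoint (which yields the size $r(q^N-1)/(q-1)$) and that any two subspaces lying in different orbits are at subspace distance at least $2k-2$ (which, together with the intra-orbit bound from Lemma \ref{generalized1}, gives minimum distance exactly $2k-2$, the value $2k-2$ being attained inside any single orbit). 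Because $d(\alpha V_i,\beta V_j)=2k-2\dim(\alpha V_i\cap\beta V_j)$ and $\dim(\alpha V_i\cap\beta V_j)=\dim(V_i\cap\delta V_j)$ with $\delta=\alpha^{-1}\beta$, both tasks reduce to analysing $\dim(V_i\cap\delta V_j)$ for $i\neq j$ and arbitrary $\delta\in\mathbb{F}_{q^N}^*$: disjointness amounts to $V_i\neq\delta V_j$ for all $\delta$, and the distance bound amounts to $\dim(V_i\cap\delta V_j)\leq 1$.

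To carry this out, I would fix $i\neq j$ and $\delta\in\mathbb{F}_{q^N}^*$, and use Lemma \ref{lem1} to write the subspace polynomial of $\delta V_j$ as $T^{(j)}_\delta(X)=X^{q^k}+\theta_j\delta^{q^k-q^\ell}X^{q^\ell}+\gamma_j\delta^{q^k-1}X$. Setting $A=\theta_j\delta^{q^k-q^\ell}-\theta_i$ and $B=\gamma_j\delta^{q^k-1}-\gamma_i$, one has $T^{(j)}_\delta(X)-T^{(i)}(X)=AX^{q^\ell}+BX$, and every common root of $T^{(i)}$ and $T^{(j)}_\delta$ annihilates this binomial. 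Observe that $V_i=\delta V_j$ is equivalent to $T^{(i)}=T^{(j)}_\delta$, i.e. to $A=B=0$; and if exactly one of $A,B$ vanishes, the binomial forces the only common root to be $0$, so $\dim(V_i\cap\delta V_j)=0$.

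The crux of the argument is to show that the hypothesis (\ref{assumption}) forbids $A=B=0$. Assuming $A=B=0$, we have $\delta^{q^k-q^\ell}=\theta_i/\theta_j$ and $\delta^{q^k-1}=\gamma_i/\gamma_j$. Writing $\mu=\delta^{q-1}$ and using $\delta^{q^k-1}=\mu^{(q^k-1)/(q-1)}$ together with $\delta^{q^\ell-1}=\delta^{q^k-1}/\delta^{q^k-q^\ell}=\mu^{(q^\ell-1)/(q-1)}$, one finds $\gamma_i/\gamma_j=\mu^{(q^k-1)/(q-1)}$ and $(\gamma_i/\gamma_j)(\theta_i/\theta_j)^{-1}=\mu^{(q^\ell-1)/(q-1)}$. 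Substituting these into the two sides of (\ref{assumption}) makes both sides collapse to the single quantity $\mu^{(q^k-1)(q^\ell-1)/(q-1)^2}$, so the two sides coincide, contradicting (\ref{assumption}). Hence $A=B=0$ never occurs, the orbits are pairwise disjoint, and I am left only with the case $A\neq 0$, $B\neq 0$.

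In that remaining case the argument mirrors the proof of Lemma \ref{generalized1}. For nonzero $a,b\in V_i\cap\delta V_j$, subtracting $T^{(i)}$ from $T^{(j)}_\delta$ at $a$ and at $b$ gives $a^{q^\ell-1}=-B/A=b^{q^\ell-1}$, so $\lambda=a/b$ satisfies $\lambda^{q^\ell}=\lambda$ and hence $\lambda\in\mathbb{F}_{q^\ell}$; feeding $a=\lambda b$ back into $T^{(i)}(a)=0$ and comparing with $\lambda\,T^{(i)}(b)=0$ yields $\lambda^{q^k}=\lambda$, so $\lambda\in\mathbb{F}_{q^k}$ as well. Since $\gcd(\ell,k)=1$, this forces $\lambda\in\mathbb{F}_{q^\ell}\cap\mathbb{F}_{q^k}=\mathbb{F}_q^*$, so $a$ and $b$ are $\mathbb{F}_q$-proportional and $\dim(V_i\cap\delta V_j)\leq 1$. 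The main obstacle is the exponent bookkeeping in the crux paragraph, namely recognizing that (\ref{assumption}) is precisely the algebraic obstruction to $A=B=0$; once that identity is in place, both the disjointness and the distance bound follow uniformly, and the rest is a routine adaptation of Lemma \ref{generalized1}.
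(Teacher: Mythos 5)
Your proposal is correct and follows essentially the same route as the paper: invoke Lemma \ref{generalized1} for intra-orbit pairs, reduce cross-orbit pairs to $\dim(V_i\cap\delta V_j)$, form the binomial $A X^{q^\ell}+BX$ from the difference of subspace polynomials, rule out $A=B=0$ by showing it would force equality in (\ref{assumption}), and handle $A\neq0\neq B$ by the $a^{q^\ell-1}=b^{q^\ell-1}$ argument with $\gcd(\ell,k)=1$. Your substitution $\mu=\delta^{q-1}$ is only a cosmetic repackaging of the paper's step of raising $\alpha^{q^\ell-1}=(\gamma_i/\gamma_j)(\theta_i/\theta_j)^{-1}$ and $\alpha^{q^k-1}=\gamma_i/\gamma_j$ to the powers $\frac{q^k-1}{q-1}$ and $\frac{q^\ell-1}{q-1}$, respectively.
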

\begin{proof}
Using   Lemma \ref{generalized1}, it is enough to prove that
\begin{equation}\label{goa2}
\dim\big(V_i\bigcap\alpha V_j\big)\leq1~~\hbox{for any $\alpha\in \mathbb{F}_{q^N}^*$  and  $1\leq i\neq j\leq r$}.
\end{equation}
The proof is similar to that of Lemma  \ref{generalized1},
with a few modifications.
Let $\alpha\in \mathbb{F}_{q^N}^*$  and let $1\leq i\neq j\leq r$ be two distinct integers.
The subspace polynomial corresponding to $\alpha V_j$
is
$$
T_\alpha^{(j)}(X)=X^{q^k}+\theta_j\alpha^{q^k-q^\ell} X^{q^\ell}+\gamma_j\alpha^{q^k-1}X,
$$
and  thus
$$
T_\alpha^{(j)}(X)-T^{(i)}(X)=\Big(\theta_j\alpha^{q^k-q^\ell}-\theta_i\Big) X^{q^\ell}+\Big(\gamma_j\alpha^{q^k-1}-\gamma_i\Big)X.
$$
As we did  in the proof of Lemma \ref{goal},
we are done if exactly one of  $\theta_j\alpha^{q^k-q^\ell}-\theta_i$ and $\gamma_j\alpha^{q^k-1}-\gamma_i$ is equal to $0$.
If
$$
\theta_j\alpha^{q^k-q^\ell}-\theta_i=0
~~\hbox{
and}
~~
\gamma_j\alpha^{q^k-1}-\gamma_i=0
$$
then
$$
\alpha^{q^\ell-1}=\frac{\gamma_i}{\gamma_j}\Big(\frac{\theta_i}{\theta_j}\Big)^{-1}
~~\hbox{
and}
~~
\alpha^{q^k-1}=\frac{\gamma_i}{\gamma_j}.
$$
This leads to
$$
\alpha^{\frac{(q^\ell-1)(q^k-1)}{q-1}}=\Big(\frac{\gamma_i}{\gamma_j}\big(\frac{\theta_i}{\theta_j}\big)^{-1}\Big)^{\frac{q^k-1}{q-1}}
$$
and
$$
\alpha^{\frac{(q^\ell-1)(q^k-1)}{q-1}}=\Big(\frac{\gamma_i}{\gamma_j}\Big)^{\frac{q^\ell-1}{q-1}}.
$$
Therefore, one has
$$
\Big(\frac{\gamma_i}{\gamma_j}\big(\frac{\theta_i}{\theta_j}\big)^{-1}\Big)^{\frac{q^k-1}{q-1}}=\alpha^{\frac{(q^\ell-1)(q^k-1)}{q-1}}=
\Big(\frac{\gamma_i}{\gamma_j}\Big)^{\frac{q^\ell-1}{q-1}},
$$
which  contradicts our
assumption (\ref{assumption}).
We can assume, therefore, that
$$
\theta_j\alpha^{q^k-q^\ell}-\theta_i\neq0~~\hbox{and}~~\gamma_j\alpha^{q^k-1}-\gamma_i\neq0
.
$$
At this point, taking arguments similar to those used in the proof of Lemma \ref{generalized1}, we obtained the desired result.
\end{proof}

The following corollaries are   direct   consequence of Lemma \ref{theorem2}.
We first specialize Lemma \ref{theorem2} to the case of $\ell=1$ and $\theta_i=\gamma_i$.
\begin{Corollary}\label{uselater1}
Let $r$ be an integer with  $1\leq r\leq q^n-1$ and let
$$
T^{(i)}(X)=X^{q^k}+\theta_i X^{q}+\theta_iX\in \mathbb{F}_{q^n}[X]
$$
be $r$ subspace polynomials over $\mathbb{F}_{q^n}$, where  $\theta_i$  are distinct nonzero elements
of  $\mathbb{F}_{q^n}$ for $1\leq i\leq r$.
Suppose that $V_i$ is the set of roots of the
subspace polynomial $T^{(i)}(X)$ and that
$V_i$ {\rm(}$1\leq i\leq r${\rm)} are contained in $\mathbb{F}_{q^N}$.
Then
$$
\mathcal{C}=\bigcup_{i=1}^r\Big\{\alpha V_i\,\Big{|}\, \alpha\in \mathbb{F}_{q^N}^*\Big\}
$$
is a $k$-dimensional cyclic subspace code of size $r\frac{(q^N-1)}{q-1}$ and  minimum distance $2k-2$.
\end{Corollary}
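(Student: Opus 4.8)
The plan is to obtain this corollary directly from Lemma \ref{theorem2} by specializing the parameters to $\ell=1$ and $\gamma_i=\theta_i$, and then checking that the separation condition (\ref{assumption}) becomes automatic under the distinctness hypothesis. First I would confirm that all the standing hypotheses of Lemma \ref{theorem2} are in place: choosing $\ell=1$ gives $\gcd(\ell,k)=\gcd(1,k)=1$ for free, and since each $\theta_i$ is a nonzero element of $\mathbb{F}_{q^n}$, the trinomial $T^{(i)}(X)=X^{q^k}+\theta_i X^{q}+\theta_i X$ has nonzero coefficient of $X$ and is therefore a genuine subspace polynomial of $q$-degree $k$. The constraint $1\leq r\leq q^n-1$ merely guarantees that $r$ pairwise distinct nonzero $\theta_i$ can actually be selected, since $\mathbb{F}_{q^n}^*$ has exactly $q^n-1$ elements.

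The key step is the algebraic collapse of condition (\ref{assumption}). Substituting $\ell=1$ together with $\gamma_i=\theta_i$ and $\gamma_j=\theta_j$, the exponent on the left simplifies via $\frac{q^\ell-1}{q-1}=\frac{q-1}{q-1}=1$, so the left-hand side of (\ref{assumption}) becomes $\big(\frac{\theta_i}{\theta_j}\big)^{1}=\frac{\theta_i}{\theta_j}$. On the right-hand side, the base reduces to $\frac{\theta_i}{\theta_j}\big(\frac{\theta_i}{\theta_j}\big)^{-1}=1$, whence the whole expression equals $1^{\frac{q^k-1}{q-1}}=1$. Thus (\ref{assumption}) reads simply $\frac{\theta_i}{\theta_j}\neq 1$, that is $\theta_i\neq\theta_j$, for all $1\leq i\neq j\leq r$. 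This is precisely the hypothesis that the $\theta_i$ are distinct, so (\ref{assumption}) holds, and Lemma \ref{theorem2} applies verbatim to yield a $k$-dimensional cyclic subspace code of size $r\frac{q^N-1}{q-1}$ and minimum distance $2k-2$.

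There is no substantive obstacle in this argument, since the corollary is a pure specialization; the only point demanding care is the exponent bookkeeping in (\ref{assumption}), namely verifying both that $\frac{q^\ell-1}{q-1}=1$ when $\ell=1$ and that imposing $\gamma_i=\theta_i$ forces the right-hand base to equal $1$. Once these two reductions are carried out correctly, the distinctness of the $\theta_i$ is exactly what (\ref{assumption}) requires, and the conclusion follows immediately from Lemma \ref{theorem2}.
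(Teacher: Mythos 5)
Your proposal is correct and follows exactly the paper's own route: specialize Lemma \ref{theorem2} to $\ell=1$ and $\gamma_i=\theta_i$, observe that the right-hand side of (\ref{assumption}) collapses to $1$ while the left-hand side becomes $\theta_i/\theta_j\neq 1$ by distinctness, and conclude. Your write-up merely spells out the exponent bookkeeping that the paper leaves implicit.
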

\begin{proof}
Take $\ell=1$    and $\theta_i=\gamma_i$ for each $1\leq i\leq r$ in Lemma \ref{theorem2}. The right-hand side of
inequality (\ref{assumption})
is equal to $1$; however, the left-hand side of (\ref{assumption}) is certainly not equal to $1$.
It follows that inequality (\ref{assumption}) holds true, and we get the desired result by applying Lemma \ref{theorem2}.
\end{proof}

The following corollary aims to provide an example of large codes for the $\ell\neq1$ case;
however, in order to state the conditions compactly,   we restrict ourself to the case where $q=2$ and $n=k$.
\begin{Corollary}\label{uselater2}
Let $k$,  $\ell$, and $r$ be positive integers with $1\leq\ell<k$,  $\gcd(\ell,k)=1$ and $1\leq r\leq 2^k-1$. Let
$$
T^{(i)}(X)=X^{2^k}+\theta_i X^{2^\ell}+\theta_iX\in \mathbb{F}_{2^k}[X]
$$
be $r$ subspace polynomials over $\mathbb{F}_{2^k}$ with $\theta_i$  being
distinct nonzero elements of $\mathbb{F}_{2^k}$ for $1\leq i\leq r$.
Suppose that $V_i$ is the set of roots of
the subspace polynomial $T^{(i)}(X)$ and that
$V_i$ {\rm(}$1\leq i\leq r${\rm)} are contained in $\mathbb{F}_{2^N}$.
Then
$$
\mathcal{C}=\bigcup_{i=1}^r\Big\{\alpha V_i\,\Big{|}\, \alpha\in \mathbb{F}_{2^N}^*\Big\}
$$
is a  $k$-dimensional cyclic subspace code of size $r(2^N-1)$ and  minimum distance $2k-2$.
\end{Corollary}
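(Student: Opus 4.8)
The plan is to obtain this corollary as a direct specialization of Lemma \ref{theorem2}, taking $q=2$, $n=k$, and $\gamma_i=\theta_i$ for every $1\le i\le r$. In this setting the asserted size $r(2^N-1)$ is exactly $r\frac{q^N-1}{q-1}$ and the minimum distance $2k-2$ is inherited verbatim, so the only real work is to confirm that the separation hypothesis (\ref{assumption}) of Lemma \ref{theorem2} is automatically satisfied for every pair $1\le i\neq j\le r$; once that is done, the conclusion follows by quoting the lemma.

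First I would evaluate the two sides of (\ref{assumption}) under the specialization. Because $\gamma_i=\theta_i$ and $\gamma_j=\theta_j$, the base on the right-hand side, $\frac{\gamma_i}{\gamma_j}\bigl(\frac{\theta_i}{\theta_j}\bigr)^{-1}$, collapses to $1$, so the entire right-hand side equals $1$ no matter what the exponent $\frac{q^k-1}{q-1}=2^k-1$ is. Using $\frac{q^\ell-1}{q-1}=2^\ell-1$ for the left-hand side, condition (\ref{assumption}) therefore reduces to the single requirement $\bigl(\frac{\theta_i}{\theta_j}\bigr)^{2^\ell-1}\neq 1$ for all $i\neq j$.

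The heart of the argument, and the step I expect to be the main obstacle, is showing that this surviving inequality cannot fail. Writing $x=\theta_i/\theta_j\in\mathbb{F}_{2^k}^*$, distinctness of the $\theta_i$ gives $x\neq 1$. If one had $x^{2^\ell-1}=1$, then $\operatorname{ord}(x)$ would divide both $2^\ell-1$ and $2^k-1$, hence divide $\gcd(2^\ell-1,2^k-1)$. Here I would invoke the standard identity $\gcd(2^\ell-1,2^k-1)=2^{\gcd(\ell,k)}-1$, which combined with the hypothesis $\gcd(\ell,k)=1$ yields $\gcd(2^\ell-1,2^k-1)=2^1-1=1$. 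Thus $\operatorname{ord}(x)=1$, forcing $x=1$, a contradiction. This establishes $\bigl(\theta_i/\theta_j\bigr)^{2^\ell-1}\neq 1$, so (\ref{assumption}) holds, and Lemma \ref{theorem2} immediately delivers a $k$-dimensional cyclic subspace code of size $r\frac{q^N-1}{q-1}=r(2^N-1)$ and minimum distance $2k-2$. Apart from the Mersenne-type gcd identity, everything is routine bookkeeping in the specialization of the general lemma.
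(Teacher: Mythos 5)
Your proposal is correct and follows essentially the same route as the paper's own proof: specialize Lemma \ref{theorem2} with $q=2$, $n=k$, $\gamma_i=\theta_i$, observe that condition (\ref{assumption}) reduces to $\bigl(\theta_i/\theta_j\bigr)^{2^\ell-1}\neq 1$, and derive this from the distinctness of the $\theta_i$ together with $\gcd(2^\ell-1,2^k-1)=1$. The only cosmetic difference is that you justify this gcd fact via the identity $\gcd(2^\ell-1,2^k-1)=2^{\gcd(\ell,k)}-1$, which the paper uses implicitly.
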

\begin{proof}
Taking $q=2$, $n=k$ and $\theta_i=\gamma_i$ for each $1\leq i\leq r$ in  Lemma \ref{theorem2}, we are left to show that
\begin{equation*}
\Big(\frac{\theta_i}{\theta_j}\Big)^{2^\ell-1}\neq1
~~\hbox{for any $1\leq i\neq j\leq r$}.
\end{equation*}
Since $\theta_i$ are distinct nonzero elements of $\mathbb{F}_{2^k}$ for $1\leq i\leq r$, we have
$$
\frac{\theta_i}{\theta_j}\neq1~~\hbox{and}~~\Big(\frac{\theta_i}{\theta_j}\Big)^{2^k-1}=1
$$
for any $1\leq i\neq j\leq r$. If
$(\theta_i/\theta_j)^{2^\ell-1}$ were equal to $1$, we would have $\theta_i/\theta_j=1$ since $\gcd(2^k-1, 2^\ell-1)=1$.
This is a contradiction.
We are done.
\end{proof}

Here is an example to illustrate Corollary \ref{uselater2}.
\begin{Example}\label{example-cor5}\rm
Take $k=5$ and $r=5$ in Corollary \ref{uselater2}.
Let $\theta$ be a generator of the cyclic group $\mathbb{F}_{2^5}^*$ given by the computer algebra system GAP \cite{GAP}.
Set $\theta_1=\theta^3$, $\theta_2=\theta^6$, $\theta_3=\theta^{12}$, $\theta_4=\theta^{17}$, and
$\theta_5=\theta^{24}$.
Fix  a value $\ell$,  $1\leq \ell\leq 4$.
Let $N_\ell'$ denote the degree of the splitting field of the polynomials
$
T^{(i)}(X)=X^{2^5}+\theta_i X^{2^\ell}+\theta_iX\in \mathbb{F}_{2^5}[X],
$
$1\leq i\leq 5$.
With $\ell$ ranging from $1$ to $4$,
Corollary \ref{uselater2} then permits us to produce $4$ constant  cyclic subspace codes in $\mathbb{F}_{2^{N}}$  having size
$5(2^{N}-1)$ and minimum distance  $2k-2=8$, where $N$ is a multiple of $N_\ell'$.
Using GAP \cite{GAP}, the values of $N_\ell'$ are listed in Table $3.2$.
\begin{table}[!h]\begin{center}\label{table12322}
\caption{The values of $N_\ell'$}
\begin{tabular}{ c|c }\hline
Values of $\ell$     &  Values of $N_\ell'$     \\\hline
$1$    &  $30$   \\
$2$    &  $70$ \\
$3$     &  $75$ \\
$4$      &  $60$ \\
\hline
\end{tabular}
\end{center}\end{table}
\end{Example}

Gluesing-Luerssen {\it et al.}  studied constant cyclic subspace codes having full minimum distances in \cite{Glu};
it is well known that the set of all cyclic
shifts of $\mathbb{F}_{q^k}$ (as a subfield of $\mathbb{F}_{q^N}$) forms a cyclic subspace code  of full minimum distance $2k$.
To insert such codes into those produced in Lemma \ref{theorem2} (where codes are characterized through subspace polynomials),
the codes having full minimum distance are also described by  subspace polynomials for consistency, as we show below.
\begin{Proposition}\label{bi}
Let $k>1$   be a positive integer and let $a_0$   be  a nonzero element  of $\mathbb{F}_{q^n}$.
Suppose that the set $U$ of roots of  the subspace polynomial
$$T(X)=X^{q^k}-a_0X\in \mathbb{F}_{q^n}[X]$$
is contained in $\mathbb{F}_{q^N}$.
Then
$$
\mathcal{C}=\Big\{\alpha U\,\Big{|}\, \alpha\in \mathbb{F}_{q^N}^*\Big\}
$$
is a $k$-dimensional cyclic subspace code of size $\frac{q^N-1}{q^k-1}$ and  minimum distance $2k$.
\end{Proposition}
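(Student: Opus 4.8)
The plan is to exploit the special multiplicative structure of the root set $U$. Writing $T(X)=X\big(X^{q^k-1}-a_0\big)$, I would first observe that the nonzero roots of $T$ are precisely the solutions of $u^{q^k-1}=a_0$; if $\beta$ is any one such solution then every solution has the form $\beta\zeta$ with $\zeta^{q^k-1}=1$, i.e.\ $\zeta\in\mathbb{F}_{q^k}^*$. Hence $U=\beta\,\mathbb{F}_{q^k}$ is simply a multiplicative translate of the subfield $\mathbb{F}_{q^k}$, and in particular the hypothesis $U\subseteq\mathbb{F}_{q^N}$ forces $\mathbb{F}_{q^k}\subseteq\mathbb{F}_{q^N}$. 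Since the coefficient of $X$ in $T$ is $-a_0\neq0$, the polynomial $T$ is a genuine subspace polynomial of $q$-degree $k$, so $\dim_{\mathbb{F}_q}U=k$; this settles that $\mathcal{C}$ is a $k$-dimensional constant dimension code. It remains to compute the size and the minimum distance.

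For the size, I would work entirely with subspace polynomials, matching the style of Lemmas \ref{generalized1} and \ref{theorem2}. By Lemma \ref{lem1}, the subspace polynomial attached to $\alpha U$ is $T_\alpha(X)=X^{q^k}-a_0\alpha^{q^k-1}X$. Because distinct subspaces correspond to distinct subspace polynomials, two shifts $\alpha U$ and $\alpha'U$ coincide if and only if $a_0\alpha^{q^k-1}=a_0(\alpha')^{q^k-1}$, i.e.\ $(\alpha/\alpha')^{q^k-1}=1$, i.e.\ $\alpha/\alpha'\in\mathbb{F}_{q^k}^*$. Thus the assignment $\alpha\mapsto\alpha U$ factors through the quotient $\mathbb{F}_{q^N}^*/\mathbb{F}_{q^k}^*$ and is injective on it, so the number of distinct shifts is $|\mathbb{F}_{q^N}^*|/|\mathbb{F}_{q^k}^*|=\frac{q^N-1}{q^k-1}$, as claimed.

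For the minimum distance, which I expect to be the full value $2k$, I would show that any two distinct codewords meet only in $0$. Multiplying by a nonzero scalar is an $\mathbb{F}_q$-linear isomorphism, so $\dim\big(\alpha U\cap\alpha'U\big)=\dim\big(U\cap\beta U\big)$ with $\beta=\alpha/\alpha'$, and distinctness gives $\beta^{q^k-1}\neq1$. If $x$ lies in $U\cap\beta U$ then $T(x)=0$ and $T_\beta(x)=0$, whence $x^{q^k}=a_0x$ and $x^{q^k}=a_0\beta^{q^k-1}x$; subtracting yields $a_0\big(\beta^{q^k-1}-1\big)x=0$, and since $a_0\neq0$ and $\beta^{q^k-1}\neq1$ this forces $x=0$. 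Hence $U\cap\beta U=\{0\}$ and the subspace distance between any two distinct codewords is $\dim U+\dim U-2\cdot0=2k$.

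There is no real obstacle here; the proposition is the degenerate ``subfield'' case in which the middle $X^{q^\ell}$ term of the earlier trinomials is absent. The only point requiring care is bookkeeping: confirming that the $q$-degree-$k$ subspace polynomial really does yield a $k$-dimensional space, and that the counting over $\mathbb{F}_{q^N}^*/\mathbb{F}_{q^k}^*$ is clean. Conceptually, everything reduces to the single observation that the quantity $a_0\alpha^{q^k-1}$ — the sole nonleading coefficient of $T_\alpha$ — completely determines the shift, so that both coincidence and intersection of shifts are governed by whether $(\alpha/\alpha')^{q^k-1}$ equals $1$.
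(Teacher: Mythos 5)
Your proof is correct and takes essentially the same route as the paper: both apply Lemma \ref{lem1} to see that the shift $\alpha U$ is governed entirely by the single nonleading coefficient $a_0\alpha^{q^k-1}$, deduce that $\alpha U=\alpha' U$ if and only if $(\alpha/\alpha')^{q^k-1}=1$ to count $\frac{q^N-1}{q^k-1}$ codewords, and obtain the full distance $2k$ from the trivial intersection of distinct shifts. The only difference is one of detail: you make explicit the spread structure $U=\beta\,\mathbb{F}_{q^k}$ and the intersection computation, which the paper dismisses as trivial — a worthwhile addition, since the clean count over $\mathbb{F}_{q^N}^*/\mathbb{F}_{q^k}^*$ silently uses $k\mid N$, and your observation $\mathbb{F}_{q^k}=\beta^{-1}U\subseteq\mathbb{F}_{q^N}$ is exactly what justifies it.
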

\begin{proof}
Let $\alpha\in \mathbb{F}_{q^N}^*$. The subspace polynomial corresponding to $\alpha U$
is
$$
T_\alpha(X)=X^{q^k}-a_0\alpha^{q^k-1}X.
$$
It is readily seen that
$$
T_\alpha(X)-T(X)=a_0\Big(1-\alpha^{q^k-1}\Big)X.
$$
Hence,  $\alpha U=U$ if and only if $\alpha^{q^k-1}=1$,
which implies that the size of $\mathcal{C}$ is equal to $(q^N-1)/(q^k-1)$.
Finally, it is trivial to see that the minimum distance of $\mathcal{C}$ is $2k$.
We are done.
\end{proof}
\begin{Remark}\rm
Theoretically, the splitting field of any binomial over a finite field can be determined easily. Indeed,
suppose $a_0$ has order $s$ in the  cyclic group $\mathbb{F}_{q^n}^*$.
Then the degree of the splitting field $N'$ of
$T(X)=X^{q^k}-a_0X\in \mathbb{F}_{q^n}[X]$
is equal to the multiplicative order of $q$ modulo $s(q^k-1)$, i.e., $N'$ is the smallest positive integer such that
$s(q^k-1)$ divides $q^{N'}-1$. In fact, a primitive $s(q^k-1)$-th root of unity $\omega$ in the finite field $\mathbb{F}_{q^{N'}}$
can be found such that $\omega^{q^k-1}=a_0$. It is clear that $k$ is a divisor of $N'$.
\end{Remark}

We present the following example to illustrate Proposition \ref{bi}.

\begin{Example}\rm
Let $k=5$, $q=3$,  and $n=5$ in Proposition \ref{bi}.
Let $a_0$ be an  element of $\mathbb{F}_{3^5}^*$ having  order $s=11$.
GAP \cite{GAP} computations show that the multiplicative order of  $q=3$ modulo $s(q^k-1)=11(3^5-1)=11\times242$ is equal to
$N'=55$. It follows that
the splitting field of the subspace polynomial
$$
T(X)=X^{3^5}-a_0X\in \mathbb{F}_{3^5}[X]
$$
is $\mathbb{F}_{3^{55}}$.  Let $U$ be the set of  roots of  $T(X)$, and it follows from Proposition \ref{bi}  that
$$
\mathcal{C}=\Big\{\alpha U\,\Big{|}\, \alpha\in \mathbb{F}_{3^{55}}^*\Big\}
$$
is a cyclic subspace  code of size $\frac{q^{55}-1}{q^5-1}=\frac{3^{55}-1}{3^5-1}$ and minimum distance $2k=10$.
\end{Example}

At the moment,
we derive the following theorem which puts certain cyclic constant dimension codes generated by Lemma
\ref{theorem2} and Proposition \ref{bi} together to form a new code.
\begin{Theorem}\label{theoremend}
Let $k$ and $\ell$ be positive integers with $1\leq\ell<k$ and $\gcd(\ell,k)=1$. Let
$$
T(X)=X^{q^k}-a_0 X\in \mathbb{F}_{q^n}[X]
$$
and
$$
T^{(i)}(X)=X^{q^k}+\theta_i X^{q^\ell}+\gamma_iX\in \mathbb{F}_{q^n}[X]
$$
be $r+1$ subspace polynomials over $\mathbb{F}_{q^n}$ with $a_0$, $\theta_i$,  and $\gamma_i$ being
nonzero elements of $\mathbb{F}_{q^n}$ for $1\leq i\leq r$.
Suppose that $U$ and  $V_i$ are the sets of roots of
the subspace polynomials $T(X)$ and $T^{(i)}(X)$, respectively,
and that   $U$ and $V_i$ are contained in $\mathbb{F}_{q^N}$ for all $1\leq i\leq r$.
If
\begin{equation*}
\Big(\frac{\gamma_i}{\gamma_j}\Big)^{\frac{q^{\ell}-1}{q-1}}\neq\Big(\frac{\gamma_i}{\gamma_j}\big(\frac{\theta_i}{\theta_j}\big)^{-1}\Big)^{\frac{q^{k}-1}{q-1}}
~~\hbox{for any $1\leq i\neq j\leq r$}
\end{equation*}
then
$$
\mathcal{C}=\Big(\bigcup_{i=1}^r\Big\{\alpha V_i\,\Big{|}\, \alpha\in \mathbb{F}_{q^N}^*\Big\}\Big)
\bigcup\Big\{\alpha U\,\Big{|}\, \alpha\in \mathbb{F}_{q^N}^*\Big\}
$$
is a $k$-dimensional cyclic subspace code of size $r\frac{q^N-1}{q-1}+\frac{q^N-1}{q^k-1}$ and  minimum distance $2k-2$.
\end{Theorem}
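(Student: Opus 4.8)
The plan is to build the statement on the two families that are already in hand. By Lemma \ref{theorem2} the trinomial part $\bigcup_{i=1}^r\{\alpha V_i\}$ is a cyclic subspace code of size $r\frac{q^N-1}{q-1}$ and minimum distance $2k-2$, and by Proposition \ref{bi} the binomial part $\{\alpha U\}$ is a cyclic subspace code of size $\frac{q^N-1}{q^k-1}$ and minimum distance $2k$. A union of cyclic codes is again cyclic, so the only genuinely new tasks are to check that (i) no codeword of the binomial part coincides with a codeword of the trinomial part, so that the sizes add, and (ii) every $\alpha V_i$ is at distance at least $2k-2$ from every $\beta U$. Both reduce to the single intersection estimate
$$
\dim\big(V_i\cap\gamma U\big)\leq 1\quad\text{for all }\gamma\in\mathbb{F}_{q^N}^*\text{ and all }1\leq i\leq r,
$$
since $\dim(\alpha V_i\cap\beta U)=\dim(V_i\cap\alpha^{-1}\beta U)$ by multiplying through by $\alpha^{-1}$, and since a shared codeword would force an intersection of dimension $k>1$.

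To prove this estimate I would imitate the argument of Lemma \ref{generalized1}. By Lemma \ref{lem1} (equivalently, by the computation in Proposition \ref{bi}) the subspace polynomial of $\gamma U$ is $T_\gamma(X)=X^{q^k}-a_0\gamma^{q^k-1}X$, so any common root $a$ of $T^{(i)}$ and $T_\gamma$ satisfies
$$
T^{(i)}(a)-T_\gamma(a)=\theta_i a^{q^\ell}+\big(\gamma_i+a_0\gamma^{q^k-1}\big)a=0.
$$
If $\gamma_i+a_0\gamma^{q^k-1}=0$, then $\theta_i a^{q^\ell}=0$ forces $a=0$, hence $V_i\cap\gamma U=\{0\}$. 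Otherwise, writing this relation for two nonzero common roots $a,b$ and dividing by $a$ and $b$ respectively gives $a^{q^\ell-1}=b^{q^\ell-1}=-(\gamma_i+a_0\gamma^{q^k-1})/\theta_i$, so $(a/b)^{q^\ell-1}=1$ and $a/b\in\mathbb{F}_{q^\ell}$.

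Setting $\lambda=a/b\in\mathbb{F}_{q^\ell}^*$, I would substitute $a=\lambda b$ into $T^{(i)}(a)=0$ and use $\lambda^{q^\ell}=\lambda$ to obtain $\lambda^{q^k}b^{q^k}+\theta_i\lambda b^{q^\ell}+\gamma_i\lambda b=0$; subtracting $\lambda T^{(i)}(b)=0$ leaves $(\lambda^{q^k}-\lambda)b^{q^k}=0$, whence $\lambda\in\mathbb{F}_{q^k}$. Because $\gcd(\ell,k)=1$, this yields $\lambda\in\mathbb{F}_{q^\ell}\cap\mathbb{F}_{q^k}=\mathbb{F}_q$, so $a=\lambda b$ with $\lambda\in\mathbb{F}_q^*$ and all nonzero elements of $V_i\cap\gamma U$ lie on a single $\mathbb{F}_q$-line, giving $\dim\le 1$. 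I would stress that this computation needs no hypothesis linking $a_0$ to the $\theta_i,\gamma_i$; the coprimality of $\ell$ and $k$ alone does the work, exactly as in Lemma \ref{generalized1}.

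Finally I would assemble the pieces. The estimate just proved gives $d(\alpha V_i,\beta U)=2k-2\dim(\alpha V_i\cap\beta U)\geq 2k-2$, and combining this with the minimum distances of the two constituent parts shows $d(\mathcal{C})\geq 2k-2$, with equality because the trinomial part already realizes distance $2k-2$ (Lemma \ref{theorem2}). The same estimate shows that no $\alpha V_i$ equals any $\beta U$, since their intersection has dimension at most $1<k$, so the binomial family is disjoint from the trinomial families and the sizes add to $r\frac{q^N-1}{q-1}+\frac{q^N-1}{q^k-1}$. The only real obstacle is task (ii), the cross-intersection between a trinomial subspace and a binomial one; but since the binomial carries no $X^{q^\ell}$ term, the difference $T^{(i)}-T_\gamma$ is already a two-term linearized polynomial, so the argument is in fact a streamlined version of the proof of Lemma \ref{generalized1} and raises no essential new difficulty.
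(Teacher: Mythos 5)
Your proposal is correct and follows essentially the same route as the paper's own proof: reduce everything to the single estimate $\dim\big(V_i\cap\gamma U\big)\leq 1$ via Lemma \ref{lem1}, split on whether $\gamma_i+a_0\gamma^{q^k-1}$ vanishes, deduce $a/b\in\mathbb{F}_{q^\ell}$, and then use the substitution $a=\lambda b$ together with $\gcd(\ell,k)=1$ to force $\lambda\in\mathbb{F}_q^*$. The only difference is presentational: you spell out the bookkeeping (disjointness of the two families, additivity of sizes, and why the minimum distance equals rather than merely bounds $2k-2$) that the paper compresses into its appeals to Lemma \ref{theorem2}, Proposition \ref{bi}, and Lemma \ref{generalized1}.
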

\begin{proof}
By Lemma \ref{theorem2} and Proposition \ref{bi}, it is enough to show that
\begin{equation*}
\dim\big(\alpha U\bigcap V_i\big)\leq1~~\hbox{for any $\alpha\in \mathbb{F}_{q^N}^*$  and  $1\leq i\leq r$}.
\end{equation*}
Fix a nonzero element  $\alpha\in \mathbb{F}_{q^N}$. The subspace polynomial corresponding to $\alpha U$
is
$$
T_\alpha(X)=X^{q^k}-a_0\alpha^{q^k-1}X.
$$
Clearly,
$$
T^{(i)}(X)-T_\alpha(X)=\theta_i X^{q^\ell}+\big(\gamma_i+a_0\alpha^{q^k-1}\big)X\in \mathbb{F}_{q^n}[X].
$$
If $\gamma_i+a_0\alpha^{q^k-1}=0$ then the desired result holds trivially. Suppose  $\gamma_i+a_0\alpha^{q^k-1}\neq0$.
Let    $a$ and $b$ be any nonzero elements of $\alpha U\bigcap V_i$.
We then have
$$
T^{(i)}(a)=T^{(i)}(b)=T_\alpha(a)=T_\alpha(b)=0.
$$
Hence,
$$
T^{(i)}(a)-T_\alpha(a)=\theta_i  a^{q^\ell}+\Big(\gamma_i+a_0\alpha^{q^k-1}\Big)a=0
$$
and
$$
T^{(i)}(b)-T_\alpha(b)=\theta_i  b^{q^\ell}+\Big(\gamma_i+a_0\alpha^{q^k-1}\Big)b=0.
$$
It follows that
$$
a^{q^\ell-1}=\frac{\gamma_i+a_0\alpha^{q^k-1}}{-\theta_i}=b^{q^\ell-1}.
$$
Therefore,
$$
\frac{a}{b}=\Big(\frac{a}{b}\Big)^{q^\ell},
$$
which gives $a/b\in \mathbb{F}_{q^\ell}$.  Let $a/b=\lambda\in \mathbb{F}_{q^\ell}$, or alternatively,  $a=b\lambda$.
As we have done in the proof of Lemma \ref{generalized1}, we conclude that $\lambda$ is, in fact,  contained in $\mathbb{F}_{q}^*$.
The proof is complete.
\end{proof}

We end this section with the following two immediate corollaries of Theorem \ref{theoremend} and Corollaries \ref{uselater1} and \ref{uselater2}.
\begin{Corollary}\label{qn}
Let $k>1$ be a positive integer.
Let $n$ and $r$ be positive integers with  $1\leq r\leq q^n-1$.
Let
$$
T(X)=X^{q^k}-a_0 X\in \mathbb{F}_{q^n}[X]
$$
and
$$
T^{(i)}(X)=X^{q^k}+\theta_i X^{q}+\theta_iX\in \mathbb{F}_{q^n}[X]
$$
be $r+1$  subspace polynomials over $\mathbb{F}_{q^n}$ with $a_0$ and $\theta_i$    being
nonzero elements of $\mathbb{F}_{q^n}$ for $1\leq i\leq r$.
Suppose that $U$ and  $V_i$ are the sets of roots of  $T(X)$ and $T^{(i)}(X)$, respectively,
and that  $U$ and $V_i$, $1\leq i\leq r$, are contained in $\mathbb{F}_{q^N}$.
Then
$$
\mathcal{C}=\Big(\bigcup_{i=1}^r\Big\{\alpha V_i\,\Big{|}\, \alpha\in \mathbb{F}_{q^N}^*\Big\}\Big)
\bigcup\Big\{\alpha U\,\Big{|}\, \alpha\in \mathbb{F}_{q^N}^*\Big\}
$$
is a  $k$-dimensional cyclic subspace code of size $r\frac{q^N-1}{q-1}+\frac{q^N-1}{q^k-1}$ and  minimum distance $2k-2$.
\end{Corollary}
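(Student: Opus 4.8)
The plan is to recognize Corollary \ref{qn} as nothing more than the specialization of Theorem \ref{theoremend} obtained by setting $\ell=1$ and $\gamma_i=\theta_i$ for every $i$, so that the entire content of the proof reduces to checking that hypothesis (\ref{assumption}) of Theorem \ref{theoremend} is automatically satisfied under these choices. Note first that with $\ell=1$ the coprimality requirement $\gcd(\ell,k)=1$ of Theorem \ref{theoremend} is trivially met, and $k>1$ is assumed; thus all the structural hypotheses carry over directly. Once (\ref{assumption}) is verified, Theorem \ref{theoremend} delivers the conclusion verbatim: a $k$-dimensional cyclic subspace code of the stated size $r\frac{q^N-1}{q-1}+\frac{q^N-1}{q^k-1}$ and minimum distance $2k-2$.

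The key computation is the simplification of (\ref{assumption}). First I would substitute $\ell=1$ into the left-hand side: since $\frac{q^\ell-1}{q-1}=1$ when $\ell=1$, the left-hand side collapses to $\frac{\gamma_i}{\gamma_j}$. Next I would substitute $\gamma_i=\theta_i$ into the right-hand side; the base $\frac{\gamma_i}{\gamma_j}\big(\frac{\theta_i}{\theta_j}\big)^{-1}$ becomes $1$, so the entire right-hand side equals $1$ regardless of the exponent $\frac{q^k-1}{q-1}$. Hence (\ref{assumption}) reduces to the single requirement $\frac{\gamma_i}{\gamma_j}\neq1$ for all $i\neq j$, that is, that the elements $\theta_i=\gamma_i$ be pairwise distinct. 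This is precisely the verification already carried out in the proof of Corollary \ref{uselater1} for the $V_i$-families, so the trinomial part of the code demands nothing new.

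The only point requiring care is the implicit distinctness of the $\theta_i$: as in Corollary \ref{uselater1}, the $\theta_i$ must be taken pairwise distinct, which makes the reduced inequality $\frac{\gamma_i}{\gamma_j}\neq1$ hold trivially and which simultaneously guarantees that the $r$ trinomial families are genuinely distinct from one another and from the binomial family, so that the code size is the full sum $r\frac{q^N-1}{q-1}+\frac{q^N-1}{q^k-1}$. The remaining new ingredient, the binomial component $\{\alpha U\mid\alpha\in\mathbb{F}_{q^N}^*\}$ of size $\frac{q^N-1}{q^k-1}$ and full distance $2k$, together with its compatibility with the trinomial families, is already absorbed into Theorem \ref{theoremend} through Proposition \ref{bi} and the intersection estimate $\dim(\alpha U\cap V_i)\leq1$ proved there. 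I therefore expect no genuine obstacle: the proof is a direct invocation of Theorem \ref{theoremend} once the exponents in (\ref{assumption}) are simplified, the sole subtlety being to record the distinctness of the $\theta_i$ that underlies both the verification of (\ref{assumption}) and the additivity of the code size.
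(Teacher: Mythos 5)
Your proposal is correct and matches the paper's treatment: the paper states Corollary \ref{qn} as an immediate consequence of Theorem \ref{theoremend}, with the verification of hypothesis (\ref{assumption}) for $\ell=1$, $\gamma_i=\theta_i$ (left side $\theta_i/\theta_j\neq 1$, right side $=1$) being exactly the computation already recorded in the proof of Corollary \ref{uselater1}. Your observation that the $\theta_i$ must implicitly be pairwise distinct (as in Corollary \ref{uselater1}, and as forced by the phrase ``$r+1$ subspace polynomials'' and the bound $r\leq q^n-1$) is a fair point of care, but it does not change the argument.
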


Here is an example to demonstrate Corollary \ref{qn}.
\begin{Example}\rm
Take $q=3$, $k=5$, $n=1$, and $r=2$ in Corollary \ref{qn}. Let $\theta_1=1$ and $\theta_2=-1$.
GAP \cite{GAP} computations show that the degrees of the splitting fields of $T^{(1)}(X)$ and $T^{(2)}(X)$ are equal to
$78$ and $121$, respectively. We simply take $a_0=1$, thus the degree of the splitting field of $T(X)$ is equal to
$5$.
Setting $N=78\times121\times5$, one knows that the vector spaces $U$, $V_1$ and $V_2$ are contained in $\mathbb{F}_{3^N}$.
It follows from Corollary \ref{qn} that
$$
\mathcal{C}=\Big(\bigcup_{i=1}^2\Big\{\alpha V_i\,\Big{|}\, \alpha\in \mathbb{F}_{3^N}^*\Big\}\Big)
\bigcup\Big\{\alpha U\,\Big{|}\, \alpha\in \mathbb{F}_{3^N}^*\Big\}
$$
is a $5$-dimensional cyclic subspace code of size $3^N-1+\frac{3^N-1}{3^5-1}$ and  minimum distance $8$.
\end{Example}

\begin{Corollary}\label{last-cor}
Let $k$,  $\ell$, and $r$ be positive integers with $1\leq\ell<k$,  $\gcd(\ell,k)=1$ and $1\leq r\leq 2^k-1$. Let
$$
T(X)=X^{2^k}+a_0 X\in \mathbb{F}_{2^k}[X]
$$
and
$$
T^{(i)}(X)=X^{2^k}+\theta_i X^{2^\ell}+\theta_iX\in \mathbb{F}_{2^k}[X]
$$
be $r+1$ subspace polynomials over $\mathbb{F}_{2^k}$ with $a_0$ and $\theta_i$     being
nonzero elements of $\mathbb{F}_{2^k}$ for $1\leq i\leq r$.
Suppose that $U$ and  $V_i$ are the sets of roots of  $T(X)$ and $T^{(i)}(X)$, respectively,
and that  $U$ and $V_i$, $1\leq i\leq r$, are contained in $\mathbb{F}_{2^N}$.
Then
$$
\mathcal{C}=\Big(\bigcup_{i=1}^r\Big\{\alpha V_i\,\Big{|}\, \alpha\in \mathbb{F}_{2^N}^*\Big\}\Big)
\bigcup\Big\{\alpha U\,\Big{|}\, \alpha\in \mathbb{F}_{2^N}^*\Big\}
$$
is a $k$-dimensional cyclic subspace code of size $r(2^N-1)+\frac{2^N-1}{2^k-1}$ and  minimum distance $2k-2$.
\end{Corollary}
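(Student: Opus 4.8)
The plan is to deduce this corollary directly from Theorem \ref{theoremend}, by specializing its parameters and then checking that its separating hypothesis is automatically satisfied under the stated assumptions. Concretely, I would set $q=2$, $n=k$, and $\gamma_i=\theta_i$ for each $1\leq i\leq r$. Two preliminary observations are worth recording first. Since we work in characteristic $2$, we have $-a_0=a_0$, so the binomial $T(X)=X^{2^k}+a_0X$ is literally the same as $X^{2^k}-a_0X$, which is exactly the form required by Proposition \ref{bi} and Theorem \ref{theoremend}; hence the factor $\{\alpha U\mid\alpha\in\mathbb{F}_{2^N}^*\}$ is precisely the code of full minimum distance $2k$ furnished by Proposition \ref{bi}, contributing $(2^N-1)/(2^k-1)$ subspaces. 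Likewise the trinomials $T^{(i)}(X)=X^{2^k}+\theta_iX^{2^\ell}+\theta_iX$ are the $q=2$, $n=k$, $\gamma_i=\theta_i$ instances of those appearing in Theorem \ref{theoremend}.

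With these identifications in place, the size count $r\frac{q^N-1}{q-1}+\frac{q^N-1}{q^k-1}=r(2^N-1)+\frac{2^N-1}{2^k-1}$, the common dimension $k$, and the minimum distance $2k-2$ all follow immediately once the hypotheses of Theorem \ref{theoremend} are verified. The only nontrivial hypothesis is inequality (\ref{assumption}). Substituting $\gamma_i=\theta_i$, its right-hand side collapses, since $\big(\tfrac{\gamma_i}{\gamma_j}(\tfrac{\theta_i}{\theta_j})^{-1}\big)^{\frac{q^k-1}{q-1}}=1^{\frac{q^k-1}{q-1}}=1$, while, on putting $q=2$, its left-hand side is $\big(\tfrac{\theta_i}{\theta_j}\big)^{\frac{2^\ell-1}{2-1}}=\big(\tfrac{\theta_i}{\theta_j}\big)^{2^\ell-1}$. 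Thus (\ref{assumption}) reduces to the single assertion that $\big(\tfrac{\theta_i}{\theta_j}\big)^{2^\ell-1}\neq1$ for all $1\leq i\neq j\leq r$.

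This last assertion is exactly what was established inside the proof of Corollary \ref{uselater2}, and I would simply reproduce that argument verbatim: as $\theta_i,\theta_j$ are distinct nonzero elements of $\mathbb{F}_{2^k}^*$, we have $\theta_i/\theta_j\neq1$ together with $(\theta_i/\theta_j)^{2^k-1}=1$; since $\gcd(k,\ell)=1$ yields $\gcd(2^k-1,2^\ell-1)=2^{\gcd(k,\ell)}-1=1$, the relation $(\theta_i/\theta_j)^{2^\ell-1}=1$ would force $\theta_i/\theta_j=1$, a contradiction. Hence (\ref{assumption}) holds, Theorem \ref{theoremend} applies, and the stated conclusion follows.

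I do not anticipate any genuine obstacle: this corollary is essentially a repackaging of Theorem \ref{theoremend} under a convenient specialization, analogous to the way Corollary \ref{uselater2} specializes Lemma \ref{theorem2}. The only two points deserving a moment's care are the characteristic-two coincidence $+a_0=-a_0$ (which is what makes the binomial of Proposition \ref{bi} directly available without a sign adjustment) and the elementary divisor identity $\gcd(2^k-1,2^\ell-1)=2^{\gcd(k,\ell)}-1$, both of which are entirely routine.
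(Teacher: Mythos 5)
Your proof is correct and matches the paper's intent exactly: the paper states this corollary without a separate proof, calling it an immediate consequence of Theorem \ref{theoremend}, and the verification you give (setting $q=2$, $n=k$, $\gamma_i=\theta_i$, noting $+a_0=-a_0$ in characteristic $2$, and checking inequality (\ref{assumption}) via $\gcd(2^k-1,2^\ell-1)=1$) is precisely the argument already used for Corollary \ref{uselater2}, which the paper points to. The only slight point of care, which you handle correctly, is that the distinctness of the $\theta_i$ (stated explicitly in Corollary \ref{uselater2} and implicit here) is what makes (\ref{assumption}) hold.
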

We present an illustrative example of  Corollary \ref{last-cor}.
\begin{Example}\rm
The codes generated by Example \ref{example-cor5} are enlarged here to produce bigger codes  without compromising the minimum distances.
Take $k=5$ and $r=5$ in Corollary \ref{last-cor}. Recall from Example \ref{example-cor5} that
we have already obtained  four $5$-dimensional cyclic subspace codes in $\mathbb{F}_{2^{N}}$, each of which has size $5(2^{N}-1)$
and minimum distance $8$. Take $a_0=1$, then $\mathbb{F}_{2^5}$ is the splitting field for $X^{2^5}+X.$  Corollary
\ref{last-cor}  thus gives us four $5$-dimensional cyclic subspace code in $\mathbb{F}_{2^{N}}$, each of which has size
$5(2^{N}-1)+(2^{N}-1)/31$ and minimum distance $8$.

\end{Example}

\section{Concluding remarks}\label{sec-concluding}
In this paper,  we study the construction of $k$-dimensional cyclic subspace codes with minimum distance $2k-2$
by  exploring further the ideas   proposed in
\cite{Ben} and \cite{OO16}.
Lemma \ref{generalized1} is the key ingredient in computing the minimum distance of cyclic subspace codes described by
a special class of subspace polynomials.
Our main result, Theorem \ref{theoremend}, improves \cite[Theorem 3]{OO16} in two directions:
First we introduce a parameter $\ell$ which is a positive integer small than and coprime to $k$,
thus \cite[Theorem 3]{Ben} and \cite[Theorem 3]{OO16} can bee seen as the  special case $\ell=1$ of our result; second we enlarge
the code size
by adjoining  with a spread code, without compromising the minimum distance.
However, the conjecture raised in \cite{Tra} and \cite{Glu} (see Section 1)  is still open.
It is interesting to find new tools or combine several tools
in the literature in order to solve this problem.
It would also be a happy outcome of this paper
if one can generalize the  method to get further results.

\vskip 5mm
\noindent{\bf Acknowledgments}\quad\small
We sincerely thank the Associate Editor and the anonymous referees
for their carefully reading and helpful suggestions
which led to significant improvements of the paper.
The research of Bocong Chen is supported by NSFC (Grant No.
11601158) and the Fundamental Research Funds for the Central Universities (Grant No. 2017MS111).
The research of Hongwei Liu is supported by NSFC (Grant No.
11171370) and self-determined research funds of CCNU from the colleges' basic research and operation of
MOE (Grant No. CCNU14F01004).

\end{document}